\definecolor{dartmouthgreen}{rgb}{0.05, 0.5, 0.06}
\newcommand{\bra}[1]{\langle {#1} |}
\newcommand{\ket}[1]{| {#1} \rangle}
\newcommand{\braket}[2]{\langle {#1} |{#2} \rangle}
\newtheorem{lemma}{Lemma}
\newtheorem{definition}{Definition}
\newtheorem{theorem}{Theorem}
\newtheorem{proposition}{Proposition}
\newtheorem{corollary}{Corollary}
\begin{document}

\title{Perfect discrimination of non-orthogonal quantum states with posterior classical partial information}
\author{Seiseki Akibue}
\email{seiseki.akibue.rb@hco.ntt.co.jp}
 \affiliation{NTT Communication Science Laboratories, NTT Corporation 3-1 Morinosato Wakamiya, Atsugi-shi, Kanagawa 243-0124, JAPAN}
 
\author{Go Kato}
\email{go.kato.gm@hco.ntt.co.jp}
 \affiliation{NTT Communication Science Laboratories, NTT Corporation 3-1 Morinosato Wakamiya, Atsugi-shi, Kanagawa 243-0124, JAPAN}

\author{Naoki Marumo}
\email{naoki.marumo.ec@hco.ntt.co.jp}
\affiliation{NTT Communication Science Laboratories, NTT Corporation 3-1 Morinosato Wakamiya, Atsugi-shi, Kanagawa 243-0124, JAPAN}

\date{\today}

\begin{abstract}
The indistinguishability of non-orthogonal pure states lies at the heart of quantum information processing. Although the indistinguishability reflects the impossibility of measuring complementary physical quantities by a single measurement, we demonstrate that the distinguishability can be perfectly retrieved simply with the help of posterior classical partial information. We demonstrate this by showing an ensemble of non-orthogonal pure states such that a state randomly sampled from the ensemble can be perfectly identified by a single measurement with help of the post-processing of the measurement outcomes and additional partial information about the sampled state, i.e., the label of subensemble from which the state is sampled. When an ensemble consists of two subensembles, we show that the perfect distinguishability of the ensemble with the help of the post-processing can be restated as a matrix-decomposition problem. Furthermore, we give the analytical solution for the problem when both subensembles consist of two states.
\end{abstract}

\maketitle
\section{Introduction}
The existence of non-orthogonal pure states is a peculiar feature of quantum mechanics. Indeed, an ensemble of them is neither perfectly cloned \cite{nocloning1, nocloning2} nor perfectly distinguishable \cite{minerror_discrimination, unambiguous_discrimination1, unambiguous_discrimination2, unambiguous_discrimination3, maxconfident_discrimination}. This is in contrast to classical theories, which assume that any ensemble of distinct pure states, each of which is not a probabilistic mixture of different states, is perfectly distinguishable in principle. While the non-orthogonality of pure states has its origin purely in quantum mechanics, we investigate its classical aspect in this paper.

From a practical point of view, the indistinguishability of non-orthogonal pure states restricts our ability to transmit information \cite{Holevo}; conversely, it enables extremely secure designs of banknotes \cite{Qmoney} and secret key distribution \cite{BB84}. For example, in the quantum key distribution (QKD) protocol proposed in \cite{BB84}, a secret bit is encoded in a basis randomly chosen from two complementary bases, $\mathbb{S}^{(A)}=(\ket{0},\ket{1})$ and $\mathbb{S}^{(B)}=(\ket{+},\ket{-})$, where $\ket{\pm}=\frac{1}{\sqrt{2}}(\ket{0}\pm\ket{1})$. An eavesdropper cannot intercept the secret bit perfectly if she does not know which basis is used since a state in $\mathbb{S}^{(A)}$ and that in $\mathbb{S}^{(B)}$ are non-orthogonal. Moreover, even if she is informed of the label of the chosen basis, $X\in\{A,B\}$, after the quantum state encoding the secret bit is destroyed by her measurement, she cannot intercept the secret bit perfectly owing to the complementarity of measurement: accurate measurement of one physical quantity entails inaccurate measurement of another complementary quantity (see Fig.~\ref{fig:setting}). Thus, it seems that a state randomly sampled from non-orthogonal pure states cannot be identified perfectly even if classical partial information about the sampled state is available after measurement of the state is performed.

\begin{figure}
 \centering
  \includegraphics[height=.18\textheight]{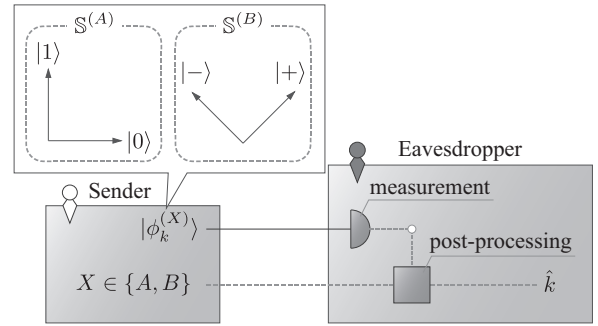}
  \caption{Indistinguishability of non-orthogonal pure states in a QKD-like protocol. First, the sender randomly chooses label $X\in\{A,B\}$ and encodes his secret bit in a basis state of $\mathbb{S}^{(X)}$. Second, the eavesdropper intercepts the state transmitted from the sender and measures it. She cannot identify the transmitted state perfectly even if she can process her measurement outcomes with label $X$.}
\label{fig:setting}
\end{figure} 

Contrary to such an intuition, in this paper, we show that such classical partial information is sometimes sufficient for accomplishing perfect discrimination of non-orthogonal pure states. Suppose that a state is randomly sampled from an ensemble of pure states, $\mathbb{S}$, consisting of two a priori known subensembles $\mathbb{S}^{(A)}$ and $\mathbb{S}^{(B)}$. 
First, we give an example of a pair of subensembles, $(\mathbb{S}^{(A)},\mathbb{S}^{(B)})$, such that $\mathbb{S}$ is an ensemble of non-orthogonal pure states but the sampled state can be perfectly identified by the classical post-processing of the measurement outcomes with the label of the subensemble, $X\in\{A,B\}$, from which the state is sampled. Second, we investigate a standard pair, $(\mathbb{S}^{(A)},\mathbb{S}^{(B)})$, which is trivially distinguishable by the post-processing. Third, we give necessary conditions for $(\mathbb{S}^{(A)},\mathbb{S}^{(B)})$ to be perfectly distinguishable by the post-processing. The conditions imply that the first example we gave can be considered as a maximally non-orthogonal distinguishable pair in the smallest Hilbert space. Finally, we show that the perfect distinguishability with the help of the post-processing can be restated as a matrix-decomposition problem, and also give the analytical solution for the problem when $|\mathbb{S}^{(A)}|=|\mathbb{S}^{(B)}|=2$. The result also implies that every perfectly distinguishable pair with the help of post-processing can be embedded in a larger Hilbert space as a standard pair.

Note that the state discrimination with the help of the post-processing has been investigated in \cite{post-processing1,post-processing2,post-processing3}, motivated by the analysis of quantum cryptographic protocols. In \cite{post-processing1} and \cite{post-processing2}, the optimal discrimination of basis states (or their probabilistic mixtures) was investigated, where the perfect discrimination is impossible in general. In \cite{post-processing3}, further investigations concerning the optimal measurement for the imperfect state discrimination were done. In contrast, we focus on the perfect discrimination of general pure states in this paper.

\section{Definitions}
We consider a quantum system represented by finite dimensional Hilbert space $\mathcal{H}$. The two a priori known ensembles of distinguishable pure states are described by indexed sets of orthonormal vectors, $\mathbb{S}^{(X)}=\big(\ket{\phi_{k}^{(X)}}\in\mathcal{H}\big)_{k\in \mathbb{K}^{(X)}}$ ($X\in\{A,B\}$), where $\mathbb{K}^{(X)}=\{0,1,\dots,|\mathbb{S}^{(X)}|-1\}$ 
 for $X\in\{A,B\}$. We suppose that the state of $\mathcal{H}$ is randomly sampled from ensemble $\mathbb{S}$ consisting of $\mathbb{S}^{(A)}$ and $\mathbb{S}^{(B)}$.

Measurement performed on $\mathcal{H}$ is described by a positive operator valued measure (POVM) over a finite set $\Omega$ \cite{minerror_discrimination}, $\Big(M_\omega\in P(\mathcal{H})\Big)_{\omega\in\Omega}$, such that $\sum_{\omega\in\Omega}M_\omega=I$, where $P(\mathcal{H})$ and $I$ represent the set of positive semi-definite operators and the identity operator on $\mathcal{H}$, respectively. After the measurement, the label of the subensemble, $X\in\{A,B\}$, from which the state is sampled is recieved, and one processes measurement outcome $\omega$ and $X$ to guess $k$ as $\hat{k}=f^{(X)}(\omega)$, where $f^{(X)}:\Omega\rightarrow\mathbb{K}^{(X)}$ for $X\in\{A,B\}$. 

Thus, pair $(\mathbb{S}^{(A)},\mathbb{S}^{(B)})$ is perfectly distinguishable by the post-processing if and only if there exist POVM $\big(M_\omega\big)_{\omega\in\Omega}$ and post-processing $\big(f^{(X)}\big)_{X\in \{A,B\}}$ such that
\begin{equation}
\label{eq:defofperfect1}
 \forall X\in\{A,B\},\forall k\in\mathbb{K}^{(X)}, \sum_{\omega\in f^{(X)-1}(k)}\bra{\phi^{(X)}_k}M_\omega\ket{\phi^{(X)}_k}=1.
\end{equation}
Note that a more general post-processing including probabilistic processing does not change the condition for the perfect distinguishability as shown in Appendix \ref{appendix:probpostprocessing}.

\section{Measurement table}
If $(\mathbb{S}^{(A)},\mathbb{S}^{(B)})$ is perfectly distinguishable by the post-processing, we can construct a measurement table representing the POVM and the classical post-processing. The measurement table is POVM over $\mathbb{K}:=\mathbb{K}^{(A)}\times\mathbb{K}^{(B)}$, $\big(M_{ab}\big)_{(a,b)\in\mathbb{K}}$ such that
\begin{equation}
 M_{ab}=\sum_{\omega\in\mathbf{f}^{-1}((a,b))}M_\omega,
\end{equation}
where $\mathbf{f}(\omega)=(f^{(A)}(\omega),f^{(B)}(\omega))$.
We can verify that $\big(M_{ab}\big)_{(a,b)\in\mathbb{K}}$ is a valid POVM, i.e., it is an indexed set of positive semi-definite operators and the sum of the elements is the identity operator. Eq.~\eqref{eq:defofperfect1} implies that
\begin{eqnarray}
\big(\forall a\in\mathbb{K}^{(A)}, &\sum_{b\in\mathbb{K}^{(B)}}\bra{\phi^{(A)}_{a}}M_{ab}\ket{\phi^{(A)}_{a}}=1\big)\nonumber\\
\label{eq:mtablecond1}
\wedge\big(\forall b\in\mathbb{K}^{(B)}, &\sum_{a\in\mathbb{K}^{(A)}}\bra{\phi^{(B)}_{b}}M_{ab}\ket{\phi^{(B)}_{b}}=1\big),
\end{eqnarray}
or equivalently,
\begin{eqnarray}
\big(\forall\{a,a'|a\neq a'\}\subseteq\mathbb{K}^{(A)},\forall b\in\mathbb{K}^{(B)}, \ket{\phi^{(A)}_{a'}}\in\ker(M_{ab})\big)\nonumber\\
\wedge\big(\forall\{b,b'|b\neq b'\}\subseteq\mathbb{K}^{(B)},\forall a\in\mathbb{K}^{(A)}, \ket{\phi^{(B)}_{b'}}\in\ker(M_{ab})\big).\nonumber\\
\label{eq:mtablecond2}
\end{eqnarray}
Conversely, if there exists a measurement table satisfying Eq.~(\ref{eq:mtablecond1}) or (\ref{eq:mtablecond2}) for $(\mathbb{S}^{(A)},\mathbb{S}^{(B)})$, it is perfectly distinguishable by the post-processing. We give an example of a measurement table which perfectly distinguishes an ensemble of non-orthogonal pure states in Table~\ref{table:ex1}, where we use notation $[\ket{\psi}]=[\psi]:=\ket{\psi}\bra{\psi}$.

\begin{table}
\begin{center}
\renewcommand
\arraystretch{1.5}
\begin{tabular}{cp{35mm}p{35mm}p{10mm}}\hline\hline
 & $\ket{0+2}$ & $\ket{0-2}$  \\ \hline
$\ket{0+1}$ & $M_{00}=\left[\frac{\sqrt{3}}{2}\ket{0+1+2}\right]$ & $M_{01}=\left[\frac{\sqrt{3}}{2}\ket{0+1-2}\right]$  \\ 
$\ket{0-1}$ & $M_{10}=\left[\frac{\sqrt{3}}{2}\ket{0-1+2}\right]$ & $M_{11}=\left[\frac{\sqrt{3}}{2}\ket{0-1-2}\right]$  \\ 
\hline\hline
\end{tabular}
\vspace{0.5cm}
\caption{Measurement table to distinguish $\mathbb{S}^{(A)}=(\ket{0+1},\ket{0-1})$ and $\mathbb{S}^{(B)}=(\ket{0+2},\ket{0-2})$, where $\ket{0+1+2}$ represents normalized state $\frac{1}{\sqrt{3}}(\ket{0}+\ket{1}+\ket{2})$. We can easily check that $\big(M_{ab}\big)$ is a valid POVM and satisfies Eq.~\eqref{eq:mtablecond2}.
}
\label{table:ex1} 
\end{center}
\end{table}


\section{Standard pair}
We define a standard pair, $(\mathbb{S}^{(A)},\mathbb{S}^{(B)})=\big(\big(\ket{\Phi^{(A)}_a}\big)_{a\in\mathbb{K}^{(A)}},\big(\ket{\Phi^{(B)}_b}\big)_{b\in\mathbb{K}^{(B)}}\big)$, which is trivially distinguishable by the post-processing as follows.
\begin{definition}
For $\mathbb{S}\subseteq\mathcal{Y}$, where $\mathcal{Y}=\mathbb{C}^{|\mathbb{K}|}$ is a Hilbert space spanned by orthonormal basis $\{\ket{ab}\}_{(a,b)\in\mathbb{K}}$, $(\mathbb{S}^{(A)},\mathbb{S}^{(B)})$ is called a standard pair if their elements are represented by
\begin{eqnarray}
 \label{eq:trivialpair}
 \ket{\Phi^{(A)}_a}=\sum_b \alpha_{ab}\ket{ab}\ \ \wedge\ \ 
 \ket{\Phi^{(B)}_b}=\sum_a \beta_{ab}\ket{ab},
\end{eqnarray}
where $\sum_b|\alpha_{ab}|^2=1$ and $\sum_a|\beta_{ab}|^2=1$.
\end{definition}
We can easily verify that the standard pair is perfectly distinguishable by measurement table $\big(M_{ab}=[ab]\big)$. In addition to the standard pair, we can verify that if $(\mathbb{S}^{(A)},\mathbb{S}^{(B)})$ can be embedded in a larger Hilbert space as a standard pair, it is also perfectly distinguishable by the post-processing as stated in the following proposition.

\begin{proposition}
\label{prop:distinguishableset1}
 Let the reduced Hilbert space of $\mathcal{H}$ be $\mathcal{X}:={\rm span}(\mathbb{S})$. If there exists isometry $V:\mathcal{X}\rightarrow\mathcal{Y}$ such that $\big(\big(V\ket{\phi^{(A)}_a}\big),\big(V\ket{\phi^{(B)}_b}\big)\big)$ is a standard pair, $(\mathbb{S}^{(A)},\mathbb{S}^{(B)})$ is perfectly distinguishable by the post-processing.
\end{proposition}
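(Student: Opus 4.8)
The plan is to pull the trivial discrimination strategy for the standard pair back through the isometry $V$. Concretely, suppose $V:\mathcal{X}\to\mathcal{Y}$ is an isometry with $\big(\big(V\ket{\phi^{(A)}_a}\big),\big(V\ket{\phi^{(B)}_b}\big)\big)$ a standard pair, so that $V\ket{\phi^{(A)}_a}=\ket{\Phi^{(A)}_a}$ and $V\ket{\phi^{(B)}_b}=\ket{\Phi^{(B)}_b}$ in the notation of the definition. The standard pair is perfectly distinguished by the canonical measurement table $\big([ab]\big)_{(a,b)\in\mathbb{K}}$ acting on $\mathcal{Y}$. I would define candidate operators on $\mathcal{H}$ by conjugating with $V$ and then completing to a POVM: first set $\widetilde{M}_{ab}:=V^\dagger[ab]V$ on $\mathcal{X}$, and then extend to all of $\mathcal{H}$ by adding the contribution on the orthogonal complement of $\mathcal{X}$ to one fixed element, e.g. replace $\widetilde{M}_{00}$ by $\widetilde{M}_{00}+(I_{\mathcal{H}}-P_{\mathcal{X}})$ where $P_{\mathcal{X}}=V^\dagger V$ is the projector onto $\mathcal{X}$.

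The first step is to check that $\big(M_{ab}\big)_{(a,b)\in\mathbb{K}}$ is a valid POVM on $\mathcal{H}$: each $V^\dagger[ab]V$ is positive semi-definite since $[ab]$ is, and $\sum_{(a,b)}V^\dagger[ab]V=V^\dagger\big(\sum_{(a,b)}[ab]\big)V=V^\dagger I_{\mathcal{Y}}V=V^\dagger V=P_{\mathcal{X}}$; adding $I_{\mathcal{H}}-P_{\mathcal{X}}$ to one element makes the sum $I_{\mathcal{H}}$, and positivity is preserved. The second step is to verify Eq.~\eqref{eq:mtablecond1}. For $a\in\mathbb{K}^{(A)}$, since $\ket{\phi^{(A)}_a}\in\mathcal{X}$ we have $(I_{\mathcal{H}}-P_{\mathcal{X}})\ket{\phi^{(A)}_a}=0$, so the added term never contributes, and
\begin{equation}
\sum_{b\in\mathbb{K}^{(B)}}\bra{\phi^{(A)}_a}M_{ab}\ket{\phi^{(A)}_a}
=\sum_{b\in\mathbb{K}^{(B)}}\bra{\Phi^{(A)}_a}\,[ab]\,\ket{\Phi^{(A)}_a}
=\sum_{b\in\mathbb{K}^{(B)}}|\alpha_{ab}|^2=1,
\end{equation}
where the last equality is the normalization condition in the definition of a standard pair; one subtlety is that when $(a,b)=(0,0)$ one uses $\bra{\phi^{(A)}_0}(I_{\mathcal{H}}-P_{\mathcal{X}})\ket{\phi^{(A)}_0}=0$ to discard the extra term. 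The calculation for the $\mathbb{S}^{(B)}$ condition is symmetric, using $\sum_a|\beta_{ab}|^2=1$. By the converse direction established right after Eq.~\eqref{eq:mtablecond2} in the text, the existence of such a measurement table implies $(\mathbb{S}^{(A)},\mathbb{S}^{(B)})$ is perfectly distinguishable by the post-processing.

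I do not expect a genuine obstacle here; the statement is essentially a routine "transport a strategy along an isometry" argument. The only points requiring a little care are the bookkeeping of the POVM completion on $\mathcal{H}\ominus\mathcal{X}$ (making sure the extra term does not spoil Eq.~\eqref{eq:mtablecond1}, which is immediate because the sampled states lie in $\mathcal{X}$) and the cleanest way to recover an explicit $\big(M_\omega\big)_{\omega\in\Omega}$ together with $\big(f^{(X)}\big)$ from the measurement table — but this is exactly the equivalence already recorded in the Measurement table section, so it can simply be cited rather than redone.
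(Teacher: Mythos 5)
Your proposal is correct and is essentially identical to the paper's proof: the paper also sets $M_{ab}=V^{\dag}[ab]V$ and absorbs the projector onto $\mathcal{X}^{\bot}$ into the $(0,0)$ element, merely leaving the POVM and Eq.~\eqref{eq:mtablecond1} verifications as ``a straightforward calculation'' that you have written out explicitly. No differences of substance.
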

\begin{proof}
 By a straightforward calculation, we can verify that the following measurement table distinguishes $(\mathbb{S}^{(A)},\mathbb{S}^{(B)})$ perfectly:
\begin{equation}
 M_{ab}=
 \left\{ \begin{array}{ll}
    P_{\bot}+V^{\dag}[00]V& (a=b=0) \\
    V^{\dag}[ab]V& ({\rm otherwise}).
  \end{array} \right.
\end{equation}
where $P_{\bot}$ is the hermitian projection to the orthogonal complement of $\mathcal{X}$.
\end{proof}

Note that if $(\mathbb{S}^{(A)},\mathbb{S}^{(B)})$ is perfectly distinguishable by measurement table $\big(M_{ab}\big)$ consisting of rank-$r$ operators with $r\leq1$, it can always be embedded in a larger Hilbert space as a standard pair by using Naimark's extension as follows:
Let $M_{ab}=[\tilde{\psi}_{ab}]$, where $\ket{\tilde{\psi}_{ab}}\in\mathcal{H}$ is an unnormalized state. Define isometry $V=\sum_{(a,b)\in\mathbb{K}}\ket{ab}\bra{\tilde{\psi}_{ab}}$. Then $\big(\big(V\ket{\phi^{(A)}_a}\big),\big(V\ket{\phi^{(B)}_b}\big)\big)$ is a standard pair. We give an example of the corresponding extension of Table \ref{table:ex1} in Table \ref{table:ex2}.

In general, we cannot assume that a measurement table consists of rank-$r$ operators with $r\leq1$. For example, it is not obvious whether the perfectly distinguishable pair given in Table~\ref{table:ex3} can be embedded in a larger Hilbert space as a standard pair. However, in Section \ref{sec:main}, we show that every perfectly distinguishable pair can be embedded as a standard pair.

\begin{table}
\begin{center}
\renewcommand
\arraystretch{1.1}
\begin{tabular}{cp{30mm}p{30mm}p{20mm}}\hline\hline
 & $\ket{+0}$ & $\ket{+1}$  \\ \hline
$\ket{0+}$ & $[00]$ & $[01]$  \\ 
$\ket{1+}$ & $[10]$ & $[11]$  \\ 
\hline\hline
\end{tabular}
\vspace{0.5cm}
\caption{Corresponding standard pair $\big(\big(V\ket{\phi^{(A)}_a}\big),\big(V\ket{\phi^{(B)}_b}\big)\big)$ of $\big(\big(\ket{\phi^{(A)}_a}\big),\big(\ket{\phi^{(B)}_b}\big)\big)$ defined in Table \ref{table:ex1}, where $V=\sum_{a,b}\ket{ab}\bra{\tilde{\psi}_{ab}}$, $\ket{\tilde{\psi}_{00}}=\frac{1}{2}(\ket{0}+\ket{1}+\ket{2})$, $\ket{\tilde{\psi}_{01}}=\frac{1}{2}(\ket{0}+\ket{1}-\ket{2})$, $\ket{\tilde{\psi}_{10}}=\frac{1}{2}(\ket{0}-\ket{1}+\ket{2})$ and $\ket{\tilde{\psi}_{11}}=\frac{1}{2}(\ket{0}-\ket{1}-\ket{2})$. A measurement table distinguishing the standard pair is also shown in the table.
}
\label{table:ex2} 
\end{center}
\end{table}

\begin{table}
\begin{center}
\renewcommand
\arraystretch{1.1}
\begin{tabular}{cp{30mm}p{30mm}p{20mm}}\hline\hline
 & $\ket{0+3}$ & $\ket{2+4}$  \\ \hline
$\ket{1+2}$ & $[0]+[1]$ & $[2]$  \\ 
$\ket{3+4}$ & $[3]$ & $[4]$  \\ 
\hline\hline
\end{tabular}
\vspace{0.5cm}
\caption{Measurement table to distinguish $\mathbb{S}^{(A)}=(\ket{1+2},\ket{3+4})$ and $\mathbb{S}^{(B)}=(\ket{0+3},\ket{2+4})$.
}
\label{table:ex3} 
\end{center}
\end{table}

\section{Necessary conditions}
We show two propositions regarding necessary conditions for the perfect distinguishability with the help of the post-processing. Since $(\mathbb{S}^{(A)},\mathbb{S}^{(B)})$ given in Table~\ref{table:ex1} saturates both conditions, it can be considered as a maximally non-orthogonal pair in the smallest Hilbert space.

\begin{proposition}
\label{prop:dim}
 If $(\mathbb{S}^{(A)},\mathbb{S}^{(B)})$ is perfectly distinguishable by the post-processing and any pair of a state in $\mathbb{S}^{(A)}$ and a state in $\mathbb{S}^{(B)}$ is non-orthogonal, the dimension of $\mathcal{H}$ must satisfy $\dim\mathcal{H}\geq|\mathbb{S}^{(A)}|+|\mathbb{S}^{(B)}|-1$.
\end{proposition}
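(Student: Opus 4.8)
The plan is to fix a measurement table $\big(M_{ab}\big)_{(a,b)\in\mathbb{K}}$ witnessing the perfect distinguishability (one exists by the discussion around Eq.~\eqref{eq:mtablecond2}, and as a POVM it satisfies $\sum_{a,b}M_{ab}=I$), and to show that the two subspaces $\mathcal{X}_A:=\mathrm{span}(\mathbb{S}^{(A)})$ and $\mathcal{X}_B:=\mathrm{span}(\mathbb{S}^{(B)})$ can intersect in at most one dimension. Since $\dim\mathcal{X}_A=|\mathbb{S}^{(A)}|$, $\dim\mathcal{X}_B=|\mathbb{S}^{(B)}|$, and $\dim\mathcal{H}\ge\dim(\mathcal{X}_A+\mathcal{X}_B)=|\mathbb{S}^{(A)}|+|\mathbb{S}^{(B)}|-\dim(\mathcal{X}_A\cap\mathcal{X}_B)$, the stated bound follows immediately once $\dim(\mathcal{X}_A\cap\mathcal{X}_B)\le 1$ is established.

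As preparation I would extract two facts. First, from Eq.~\eqref{eq:mtablecond2}, $M_{ab}\ket{\phi^{(A)}_{a'}}=0$ for $a'\neq a$ and $M_{ab}\ket{\phi^{(B)}_{b'}}=0$ for $b'\neq b$. Second, applying $\sum_{a',b'}M_{a'b'}=I$ to $\ket{\phi^{(B)}_b}$ (only the $b'=b$ terms survive) and pairing with $\bra{\phi^{(A)}_a}$ (only the $a'=a$ term survives, $M_{a'b}$ being Hermitian) gives $\bra{\phi^{(A)}_a}M_{ab}\ket{\phi^{(B)}_b}=\braket{\phi^{(A)}_a}{\phi^{(B)}_b}$, which is \emph{nonzero} by the non-orthogonality hypothesis.

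Next, take an arbitrary $\ket{v}\in\mathcal{X}_A\cap\mathcal{X}_B$ and expand it in the two orthonormal bases, $\ket{v}=\sum_a c_a\ket{\phi^{(A)}_a}=\sum_b d_b\ket{\phi^{(B)}_b}$ with $c_a=\braket{\phi^{(A)}_a}{v}$ and $d_b=\braket{\phi^{(B)}_b}{v}$. Acting with $M_{ab}$ and invoking the first fact collapses each expansion to a single term, so $c_a M_{ab}\ket{\phi^{(A)}_a}=d_b M_{ab}\ket{\phi^{(B)}_b}$; pairing with $\bra{\phi^{(A)}_a}$ and using the second fact yields $c_a\bra{\phi^{(A)}_a}M_{ab}\ket{\phi^{(A)}_a}=d_b\braket{\phi^{(A)}_a}{\phi^{(B)}_b}$. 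I then claim that if $\ket{v}\neq 0$ then $c_a\neq 0$ for every $a$: were $c_{a_0}=0$, the identity above would force $d_b\braket{\phi^{(A)}_{a_0}}{\phi^{(B)}_b}=0$, hence $d_b=0$, for every $b$ (non-orthogonality), hence $\ket{v}=0$, a contradiction.

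Finally, the map $\ket{v}\mapsto\braket{\phi^{(A)}_{a_0}}{v}$ is a linear functional on $\mathcal{X}_A\cap\mathcal{X}_B$ whose kernel contains no nonzero vector; since the kernel of a linear functional on a space of dimension $k$ has dimension at least $k-1$, this is possible only if $\dim(\mathcal{X}_A\cap\mathcal{X}_B)\le 1$ (the cases $|\mathbb{S}^{(A)}|=1$ or $|\mathbb{S}^{(B)}|=1$ are also covered, the intersection then lying in a one-dimensional space). Combining with the dimension count completes the argument. I expect the only non-routine points to be isolating the identity $\bra{\phi^{(A)}_a}M_{ab}\ket{\phi^{(B)}_b}=\braket{\phi^{(A)}_a}{\phi^{(B)}_b}$ from POVM completeness together with the kernel conditions, and, more conceptually, recognizing that the right quantity to bound is $\dim(\mathcal{X}_A\cap\mathcal{X}_B)$ rather than anything concerning the ranks or ranges of the individual $M_{ab}$.
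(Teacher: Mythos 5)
Your proof is correct, and it is a genuine repackaging of the paper's argument rather than a copy of it. Both proofs rest on the same two facts extracted from a measurement table: the kernel conditions of Eq.~\eqref{eq:mtablecond2}, and the identity $\bra{\phi^{(A)}_a}M_{ab}\ket{\phi^{(B)}_b}=\braket{\phi^{(A)}_a}{\phi^{(B)}_b}$ obtained from POVM completeness plus those kernel conditions. The paper then argues by contradiction: it shows that non-orthogonality is incompatible with having $\ket{\phi^{(B)}_0},\ket{\phi^{(B)}_1}\in{\rm span}(\mathbb{S}^{(A)}\cup\mathbb{S}^{(B)c})$, and it leaves implicit the linear-algebra step that $\dim\mathcal{H}<|\mathbb{S}^{(A)}|+|\mathbb{S}^{(B)}|-1$ forces such a configuration after a suitable relabeling of $\mathbb{S}^{(B)}$. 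You instead isolate the quantity $\dim(\mathcal{X}_A\cap\mathcal{X}_B)$ with $\mathcal{X}_A={\rm span}(\mathbb{S}^{(A)})$, $\mathcal{X}_B={\rm span}(\mathbb{S}^{(B)})$, show that any nonzero vector in the intersection must have all coefficients $c_a=\braket{\phi^{(A)}_a}{v}$ nonzero, conclude via the trivial-kernel functional that the intersection is at most one-dimensional, and read off the bound from $\dim(\mathcal{X}_A+\mathcal{X}_B)=|\mathbb{S}^{(A)}|+|\mathbb{S}^{(B)}|-\dim(\mathcal{X}_A\cap\mathcal{X}_B)$. The two routes are equivalent in substance --- the paper's condition 2 is exactly $\dim(\mathcal{X}_A\cap\mathcal{X}_B)\geq 2$ in disguise, since each relation $\ket{\phi^{(B)}_c}=\sum_a\alpha_{ac}\ket{\phi^{(A)}_a}+\sum_{b\geq2}\beta_{bc}\ket{\phi^{(B)}_b}$ produces an element of the intersection --- but your version makes the reduction step explicit, identifies the intersection dimension as the right invariant, and handles the degenerate cases $|\mathbb{S}^{(A)}|=1$ or $|\mathbb{S}^{(B)}|=1$ uniformly rather than dismissing them separately.
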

\begin{proof}
If either $|\mathbb{S}^{(A)}|$ or $|\mathbb{S}^{(B)}|$ is $1$, the statement is trivial. Thus, we assume $|\mathbb{S}^{(A)}|\geq2$ and $|\mathbb{S}^{(B)}|\geq2$.

 It is enough to show that for any perfectly distinguishable $(\mathbb{S}^{(A)},\mathbb{S}^{(B)})$, the following two conditions cannot be satisfied simultaneously:
 \begin{enumerate}
 \item $\forall a\in\mathbb{K}^{(A)},\forall c\in\{0,1\}, \braket{\phi^{(A)}_a}{\phi^{(B)}_{c}}\neq0$,
 \item $\forall c\in\{0,1\},\ket{\phi^{(B)}_{c}}\in{\rm span}(\mathbb{S}^{(A)}\cup\mathbb{S}^{(B)c})$, where $\mathbb{S}^{(B)c}=\mathbb{S}^{(B)}\setminus (\ket{\phi^{(B)}_{0}},\ket{\phi^{(B)}_{1}})$.
\end{enumerate}
If $(\mathbb{S}^{(A)},\mathbb{S}^{(B)})$ is perfectly distinguishable, we can find a measurement table $\big(M_{ab}\big)$. If the second condition is satisfied, we can find the following decompositions:
\begin{equation}
 \ket{\phi^{(B)}_c}=\sum_{a\in\mathbb{K}^{(A)}}\alpha_{ac}\ket{\phi^{(A)}_a}+\sum_{b\geq2}\beta_{bc}\ket{\phi^{(B)}_{b}}
\end{equation}
for $c\in\{0,1\}$. Since Eq.~\eqref{eq:mtablecond2} implies $M_{a,1-c}\ket{\phi^{(B)}_c}=0$, we obtain 
\begin{equation}
 \forall a\in\mathbb{K}^{(A)},\forall c\in\{0,1\},\alpha_{ac}M_{a,1-c}\ket{\phi^{(A)}_a}=0.
\end{equation}

If the first condition is satisfied, since Eq.~\eqref{eq:mtablecond2} guarantees $\bra{\phi^{(A)}_a}M_{ac}\ket{\phi^{(B)}_c}=\braket{\phi^{(A)}_a}{\phi^{(B)}_c}\neq0$, we obtain 
\begin{equation}
 \forall a\in\mathbb{K}^{(A)},\forall c\in\{0,1\},\alpha_{ac}M_{ac}\ket{\phi^{(A)}_a}\neq0,
\end{equation}
which leads us to a contradiction.
\end{proof}

This proposition shows that the retrieval of the perfect distinguishability of such non-orthogonal pure states appears only with $d(\geq 3)$ dimensional Hilbert space.

\begin{proposition}
If $(\mathbb{S}^{(A)},\mathbb{S}^{(B)})$ is perfectly distinguishable by the post-processing, then $\min\{|\braket{\phi^{(A)}_a}{\phi^{(B)}_b}|^2\}_{(a,b)\in\mathbb{K}}\leq\frac{1}{|\mathbb{S}^{(A)}||\mathbb{S}^{(B)}|}$.
\end{proposition}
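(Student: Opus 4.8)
The plan is to reduce each overlap $\braket{\phi^{(A)}_a}{\phi^{(B)}_b}$ to a single matrix element of a measurement table, apply the operator Cauchy--Schwarz inequality, and then combine the normalisation in Eq.~\eqref{eq:mtablecond1} with the AM--GM inequality. Throughout, write $m:=|\mathbb{S}^{(A)}|$, $n:=|\mathbb{S}^{(B)}|$, and fix a measurement table $\big(M_{ab}\big)_{(a,b)\in\mathbb{K}}$ furnished by perfect distinguishability.

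First I would collapse the overlap to a diagonal block. Since $\sum_{(a',b')\in\mathbb{K}}M_{a'b'}=I$, we have $\braket{\phi^{(A)}_a}{\phi^{(B)}_b}=\sum_{a',b'}\bra{\phi^{(A)}_a}M_{a'b'}\ket{\phi^{(B)}_b}$. By Eq.~\eqref{eq:mtablecond2}, $\ket{\phi^{(A)}_a}\in\ker(M_{a'b'})$ for $a'\neq a$; Hermiticity of $M_{a'b'}$ upgrades this to $\bra{\phi^{(A)}_a}M_{a'b'}=0$, so all terms with $a'\neq a$ vanish, and $M_{ab'}\ket{\phi^{(B)}_b}=0$ for $b'\neq b$ removes the rest. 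Hence $\braket{\phi^{(A)}_a}{\phi^{(B)}_b}=\bra{\phi^{(A)}_a}M_{ab}\ket{\phi^{(B)}_b}$ for every $(a,b)\in\mathbb{K}$.

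Next, since $M_{ab}\in P(\mathcal{H})$, the form $(u,v)\mapsto\bra{u}M_{ab}\ket{v}$ is positive semi-definite, so Cauchy--Schwarz gives $|\braket{\phi^{(A)}_a}{\phi^{(B)}_b}|^2=|\bra{\phi^{(A)}_a}M_{ab}\ket{\phi^{(B)}_b}|^2\leq p_{ab}\,q_{ab}$, where $p_{ab}:=\bra{\phi^{(A)}_a}M_{ab}\ket{\phi^{(A)}_a}$ and $q_{ab}:=\bra{\phi^{(B)}_b}M_{ab}\ket{\phi^{(B)}_b}$. Finally I would invoke Eq.~\eqref{eq:mtablecond1}, which states $\sum_{b}p_{ab}=1$ for each $a$ and $\sum_{a}q_{ab}=1$ for each $b$; by AM--GM, $\prod_{b}p_{ab}\leq n^{-n}$ and $\prod_{a}q_{ab}\leq m^{-m}$, so $\prod_{(a,b)\in\mathbb{K}}p_{ab}q_{ab}\leq (n^{-n})^{m}(m^{-m})^{n}=(mn)^{-mn}$. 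Since the minimum of the $mn$ nonnegative numbers $p_{ab}q_{ab}$ is at most their geometric mean, $\min_{(a,b)}p_{ab}q_{ab}\leq(mn)^{-1}$, and combining with the Cauchy--Schwarz bound gives $\min_{(a,b)\in\mathbb{K}}|\braket{\phi^{(A)}_a}{\phi^{(B)}_b}|^{2}\leq\frac{1}{|\mathbb{S}^{(A)}||\mathbb{S}^{(B)}|}$.

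The argument is short, and the step requiring attention is the first one: one must pass through Hermiticity of $M_{a'b'}$ to turn the kernel condition into the bra-side identity before the double sum collapses to its diagonal term. Everything after that is Cauchy--Schwarz together with AM--GM, and the resulting bound is tight — the pair of Table~\ref{table:ex1} has $|\mathbb{S}^{(A)}||\mathbb{S}^{(B)}|=4$ and all four overlaps of squared modulus $1/4$ — so it indeed qualifies as a maximally non-orthogonal distinguishable pair.
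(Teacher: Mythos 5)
Your proof is correct and follows essentially the same route as the paper's: reduce each overlap to $\bra{\phi^{(A)}_a}M_{ab}\ket{\phi^{(B)}_b}$ via Eq.~\eqref{eq:mtablecond2}, apply Cauchy--Schwarz, and bound the product of the diagonal terms by AM--GM using Eq.~\eqref{eq:mtablecond1}. The only differences are cosmetic: you spell out the collapse of $\sum_{a'b'}M_{a'b'}=I$ to the single $(a,b)$ term (which the paper leaves implicit) and apply AM--GM row-by-row and column-by-column rather than globally, yielding the same bound $(|\mathbb{S}^{(A)}||\mathbb{S}^{(B)}|)^{-|\mathbb{S}^{(A)}||\mathbb{S}^{(B)}|}$.
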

\begin{proof}
Let $\big(M_{ab}\big)$ be a measurement table distinguishing $(\mathbb{S}^{(A)},\mathbb{S}^{(B)})$. By using Cauchy-Schwartz inequality, AM-GM inequality, Eq.~\eqref{eq:mtablecond1}, and Eq.~\eqref{eq:mtablecond2}, we can derive the following inequality:
\begin{eqnarray}
 \prod_{ab}|\braket{\phi^{(A)}_a}{\phi^{(B)}_b}|^2&=& \prod_{ab}|\bra{\phi^{(A)}_a}M_{ab}\ket{\phi^{(B)}_b}|^2\nonumber\\
 &\leq&\prod_{ab}\bra{\phi^{(A)}_a}M_{ab}\ket{\phi^{(A)}_a}\bra{\phi^{(B)}_b}M_{ab}\ket{\phi^{(B)}_b}\nonumber\\
 &\leq&\left(\sum_{ab}\frac{\bra{\phi^{(A)}_a}M_{ab}\ket{\phi^{(A)}_a}}{|\mathbb{S}^{(A)}||\mathbb{S}^{(B)}|}\right)^{|\mathbb{S}^{(A)}||\mathbb{S}^{(B)}|}\cdot\nonumber\\
 &&\left(\sum_{ab}\frac{\bra{\phi^{(B)}_b}M_{ab}\ket{\phi^{(B)}_b}}{|\mathbb{S}^{(A)}||\mathbb{S}^{(B)}|}\right)^{|\mathbb{S}^{(A)}||\mathbb{S}^{(B)}|}\nonumber\\
 &=&|\mathbb{S}^{(A)}||\mathbb{S}^{(B)}|^{-|\mathbb{S}^{(A)}||\mathbb{S}^{(B)}|}.
\end{eqnarray}
This completes the proof.
\end{proof}
This proposition shows that there does not exists perfectly distinguishable pair each of whose pair-wise overlap $|\braket{\phi^{(A)}_a}{\phi^{(B)}_b}|$ is strictly larger than the pair given in Table \ref{table:ex1}.

Note that we did not assume that the perfectly distinguishable pair can be embedded as a standard pair in the proofs. This allows us to apply these propositions to a more general setting as discussed in Section \ref{sec:MKP}.

\section{Perfect distinguishability as a matrix decomposition}
\label{sec:main}
We show that the perfect distinguishability with the help of the post-processing can be restated as a matrix-decomposition problem, and give the analytical solution for the problem in the case of $|\mathbb{S}^{(A)}|=|\mathbb{S}^{(B)}|=2$. This result also implies that any perfectly distinguishable pair with the help of the post-processing can be embedded in a larger Hilbert space as a standard pair (see Table \ref{table:ex4}). The main theorem uses Lemma \ref{lemma:decomposition} followed by several definitions about linear algebra.

\begin{table}
\begin{center}
\renewcommand
\arraystretch{1.1}
\begin{tabular}{cp{20mm}p{30mm}p{20mm}}\hline\hline
 & $\ket{10}$ & $\left(\sqrt{\frac{2}{3}}\ket{0}+\sqrt{\frac{1}{3}}\ket{1}\right)\ket{1}$  \\ \hline
$\ket{0}\left(\sqrt{\frac{5}{8}}\ket{0}+\sqrt{\frac{3}{8}}\ket{1}\right)$ & $[00]$ & $[01]$  \\ 
$\ket{1}\left(\sqrt{\frac{1}{4}}\ket{0}+\sqrt{\frac{3}{4}}\ket{1}\right)$ & $[10]$ & $[11]$  \\ 
\hline\hline
\end{tabular}
\vspace{0.5cm}
\caption{Corresponding standard pair $\big(\big(V\ket{\phi^{(A)}_a}\big),\big(V\ket{\phi^{(B)}_b}\big)\big)$ of $\big(\big(\ket{\phi^{(A)}_a}\big),\big(\ket{\phi^{(B)}_b}\big)\big)$ defined in Table \ref{table:ex3}, where $V=\sum_{ab}\ket{ab}\bra{\psi_{ab}}$, where $\ket{\psi_{00}}=\frac{1}{2\sqrt{5}}(-\ket{0}+4\ket{1}+\ket{2}+\ket{3}-\ket{4})$, $\ket{\psi_{01}}=\frac{1}{2\sqrt{3}}(\ket{0}+3\ket{2}-\ket{3}+\ket{4})$, $\ket{\psi_{10}}=\frac{1}{\sqrt{2}}(\ket{0}+\ket{3})$ and $\ket{\psi_{11}}=\frac{1}{\sqrt{6}}(-\ket{0}+\ket{3}+2\ket{4})$.
}
\label{table:ex4} 
\end{center}
\end{table}

\begin{definition}
For two $n$ by $m$ matrices $A$ and $B$, when $A$ is not element-wise smaller than $B$, i.e., $\forall i\in\{1,\dots,n\},\forall j\in\{1,\dots,m\},A_{ij}\geq B_{ij}$, we denote $A\geq B$.
\end{definition}

\begin{definition}
For two $n$ by $m$ matrices $A$ and $B$, when $A$ is element-wise larger than $B$, i.e., $\forall i\in\{1,\dots,n\},\forall j\in\{1,\dots,m\},A_{ij}> B_{ij}$, we denote $A> B$.
\end{definition}

\begin{definition}
 $n$ by $m$ matrices $A$ is called a right stochastic matrix if $A\geq 0$ and $\forall i\in\{1,\dots,n\}, \sum_j A_{ij}=1$, where $x\in\mathbb{R}$ in the matrix (in)equality represents the appropriately sized matrix all of whose element are $x$.
\end{definition}

\begin{definition}
 $n$ by $m$ matrices $B$ is called a left stochastic matrix if $B\geq 0$ and $\forall j\in\{1,\dots,m\}, \sum_i B_{ij}=1$.
\end{definition}

\begin{definition}
 The set of matrices that can be decomposed into the element-wise product of a right stochastic matrix and left one is defined by
 \begin{equation}
 \bar{\mathcal{D}}(n,m):=\{P\in L(\mathbb{R}^m,\mathbb{R}^n)|P=A\circ B\},
\end{equation}
where $\circ$ represents the element-wise product, $L(\mathbb{R}^m,\mathbb{R}^n)$ represents the set of $n$ by $m$ matrices and $A$ and $B$ are a right stochastic matrix and a left one, respectively.
\end{definition}

\begin{definition}
 The set of element-wise positive matrices in $\bar{\mathcal{D}}(n,m)$ is defined by
 \begin{equation}
 \mathcal{D}(n,m):=\{P\in \bar{\mathcal{D}}(n,m)|P>0\}.
\end{equation}
\end{definition}

Note that $\bar{\mathcal{D}}(n,m)$ is the closure of $\mathcal{D}(n,m)$ as shown in Appendix \ref{appendix:closure}.

\begin{lemma}
\label{lemma:decomposition}
 If $n\geq2$ and $m\geq2$, the following statement holds: for any $P\in\bar{\mathcal{D}}(n,m)$ and for any $Q\in L(\mathbb{R}^m,\mathbb{R}^n)$,
\begin{equation}
\label{eq:statement1}
 0\leq Q\leq P\Rightarrow Q\in\bar{\mathcal{D}}(n,m).
\end{equation}
\end{lemma}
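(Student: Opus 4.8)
The plan is to reduce the statement, in two steps, to the special case $n=m=2$, and then to settle that case by an explicit parametrisation.

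\emph{Two reductions.} First, it suffices to treat the case in which $Q$ differs from $P$ in a single entry: given $0\le Q\le P\in\bar{\mathcal{D}}(n,m)$, lower the entries of $P$ to those of $Q$ one at a time, producing a chain $P=P^{(0)},P^{(1)},\dots,P^{(nm)}=Q$ with consecutive terms differing in one entry, and apply the single-entry statement $nm$ times. So suppose $P=A\circ B$ (with $A$ right stochastic, $B$ left stochastic) and that $P_{i_0j_0}$ is to be lowered. Choose $i_1\ne i_0$ and $j_1\ne j_0$ — this is the only place $n,m\ge2$ enters — and set $\sigma_i=A_{ij_0}+A_{ij_1}$, $\tau_j=B_{i_0j}+B_{i_1j}$ for $i\in\{i_0,i_1\}$, $j\in\{j_0,j_1\}$ (if some $\sigma_i$ or $\tau_j$ vanishes, the decrease is by $0$ and there is nothing to prove). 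I would change only the four $A$-entries and the four $B$-entries lying in rows $\{i_0,i_1\}$ and columns $\{j_0,j_1\}$, keeping the two partial row sums $\sigma_i$ of that $A$-block and the two partial column sums $\tau_j$ of that $B$-block fixed; this automatically keeps $A$ right stochastic and $B$ left stochastic, and alters $A\circ B$ only inside the block. Dividing the two rows of the $A$-block by $\sigma_i$ and the two columns of the $B$-block by $\tau_j$ turns the task into: \emph{lowering the $(1,1)$ entry of a member of $\bar{\mathcal{D}}(2,2)$ to any prescribed nonnegative value keeps it in $\bar{\mathcal{D}}(2,2)$}, i.e. the lemma for $n=m=2$.

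\emph{The case $n=m=2$.} Every element of $\bar{\mathcal{D}}(2,2)$, read off as $(P_{11},P_{12},P_{21},P_{22})$, equals $\Phi(x,y,r,s):=\bigl(xr,\,(1-x)s,\,(1-r)y,\,(1-s)(1-y)\bigr)$ for some $(x,y,r,s)\in[0,1]^4$ (with $(x,y)$ the first column of the right-stochastic factor and $(r,s)$ the first row of the left-stochastic factor). By relabelling rows and columns it is enough to lower $P_{11}$. Starting from $(p_1,p_2,p_3,p_4)=\Phi(x,y,r,s)$, I would keep $p_2,p_3,p_4$ fixed and let a parameter $x'$ run down from $x$ to $0$, successively defining $s'=p_2/(1-x')$ from the second coordinate, then $y'$ from $(1-s')(1-y')=p_4$, then $r'$ from $(1-r')y'=p_3$. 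Membership of $(p_1,p_2,p_3,p_4)$ in $\bar{\mathcal{D}}(2,2)$ supplies precisely the inequalities that keep $s',y',r'$ in $[0,1]$ along the way — e.g. $p_2=(1-x)s\le 1-x\le 1-x'$, and $p_4=(1-s)(1-y)\le\bigl(1-(1-x)s\bigr)\bigl(1-(1-r)y\bigr)=(1-p_2)(1-p_3)$, and $p_3\le y\le y'$ because $y'$ grows as $x'$ shrinks. Then $x'\mapsto x'r'$ is continuous, equals $p_1$ at $x'=x$ and $0$ at $x'=0$, so by the intermediate value theorem it realises every value in $[0,p_1]$, which is the desired decrease.

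\emph{Where the work is.} The hard part is the base case: because $\Phi$ has four-dimensional image, fixing three of the four coordinates pins the fourth, so the one-parameter family along which $p_1$ is lowered must be chosen just so and verified to stay in $[0,1]^4$, and the degenerate configurations — some of $x,y,r,s$ or of $s',y',r'$ equal to $0$ or $1$, where a division step collapses — have to be handled separately (most cleanly by hand, or by invoking that $\bar{\mathcal{D}}$ is closed). By contrast, the two reductions, together with the verification that the block modification preserves right/left stochasticity and touches only the intended entry of $A\circ B$, are routine.
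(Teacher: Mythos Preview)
Your two reductions---lowering one entry at a time, then isolating a $2\times2$ block and renormalising its row/column sums---are exactly the reductions the paper performs (there phrased via the multiplier $T^{(ij)}_\delta$ and the submatrices $\tilde A[2],\tilde B[2]$). The genuine difference is in the $2\times2$ base case. The paper eliminates all but one free parameter, obtains the scalar equation $f(A_{11})=1$, and shows by convexity that this has a solution precisely when
\[
P^c_{11}P^c_{22}+P^c_{12}P^c_{21}-2\sqrt{P_{11}P_{12}P_{21}P_{22}}\ge1,
\]
an inequality that is visibly monotone in each $P_{ij}$; downward closure is then immediate. Your argument instead builds an explicit one-parameter deformation $(x',y',r',s')$ and appeals to the intermediate value theorem. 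Both are valid; the paper's route has the advantage of producing the closed-form criterion that is later used (the Corollary characterising perfectly distinguishable $2\times2$ pairs), and it sidesteps all degeneracies in one stroke by first passing to the strictly positive interior $\mathcal{D}(n,m)$ and invoking $\bar{\mathcal{D}}=\overline{\mathcal{D}}$---a cleaner treatment than the case-by-case handling of $s',y',r'\in\{0,1\}$ and of vanishing $\sigma_i,\tau_j$ that you (correctly) flag as residual work.
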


\begin{proof}
First, we show that it is sufficient to prove
\begin{equation}
\label{eq:statement2}
  \forall P\in\mathcal{D}(n,m),\forall Q, 0< Q\leq P\Rightarrow Q\in\mathcal{D}(n,m).
\end{equation}
Assume Eq.~\eqref{eq:statement2} holds. Since $\bar{\mathcal{D}}(n,m)$ is the closure of $\mathcal{D}(n,m)$, for any $P\in\bar{\mathcal{D}}(n,m)$ and for any $\delta>0$, there exists $P'\in\mathcal{D}(n,m)$ such that $|P-P'|<\delta$. For any $Q\in L(\mathbb{R}^m,\mathbb{R}^n)$ such that $0\leq Q\leq P$, we define $Q'\in L(\mathbb{R}^m,\mathbb{R}^n)$ as
\begin{equation}
 Q'_{ij}=
 \left\{ \begin{array}{ll}
    Q_{ij} & (0<Q_{ij}\leq P'_{ij}) \\
    P'_{ij} & (Q_{ij}> P'_{ij})\\
    \min\{\delta,P'_{ij}\}&(Q_{ij}=0).
  \end{array} \right.
\end{equation}
Since $0< Q'\leq P'$, $Q'\in\mathcal{D}(n,m)$ by using Eq.~\eqref{eq:statement2}. Note that for any $\epsilon>0$, there exists sufficiently small $\delta>0$ such that $|Q-Q'|<\epsilon$. Thus, $Q\in\bar{\mathcal{D}}(n,m)$.

Second, we show that it is sufficient to prove
\begin{equation}
\label{eq:statement3}
  \forall P\in\mathcal{D}(2,2),\forall Q, 0< Q\leq P\Rightarrow Q\in\mathcal{D}(2,2).
\end{equation}
Note that for proving Eq.~\eqref{eq:statement2}, it is sufficient to prove for any $i\in\{1,\dots,n\}$ and $j\in\{1,\dots,m\}$ and for any $\delta\in(0,1]$,
\begin{equation}
 \label{eq:statement4}
  \forall P\in\mathcal{D}(n,m), P\circ T^{(ij)}_{\delta}\in\mathcal{D}(n,m),
\end{equation}
where $T^{(ij)}_\delta$ is the $n$ by $m$ matrix all of whose elements are $1$ except the $(i,j)$ element, which is set to $\delta$. Assume Eq.~\eqref{eq:statement3} holds. For any $P\in\mathcal{D}(n,m)$ and for any $T^{(ij)}_\delta$, pick up their arbitrary $2$ by $2$ submatrices $P[2]$ and $T^{(ij)}_\delta[2]$ containing the $(i,j)$ element. (There exist such submatrices since we assume $n\geq2$ and $m\geq2$.) Letting $P=A\circ B$, the corresponding submatrices $A[2]$ and $B[2]$ satisfy $P[2]=A[2]\circ B[2]$. Define right stochastic matrix $\tilde{A}[2]$ and left one $\tilde{B}[2]$ by
\begin{eqnarray}
\tilde{A}[2]&:=&A[2]\circ  \begin{pmatrix}
 \frac{1}{A[2]_{1*}}& \frac{1}{A[2]_{1*}}\\
  \frac{1}{A[2]_{2*}}& \frac{1}{A[2]_{2*}}
\end{pmatrix}\\
\tilde{B}[2]&:=&B[2]\circ  \begin{pmatrix}
 \frac{1}{B[2]_{*1}}& \frac{1}{B[2]_{*2}}\\
  \frac{1}{B[2]_{*1}}& \frac{1}{B[2]_{*2}}
\end{pmatrix},
\end{eqnarray}
where $A[2]_{i*}=A[2]_{i1}+A[2]_{i2}$ and $B[2]_{*j}=B[2]_{1j}+B[2]_{2j}$. Since $0<\tilde{A}[2]\circ\tilde{B}[2]\circ T^{(ij)}_{\delta}[2]\leq \tilde{A}[2]\circ\tilde{B}[2]$, there exists right stochastic matrix $\tilde{A}'[2]$ and left one $\tilde{B}'[2]$ satisfying $\tilde{A}'[2]\circ\tilde{B}'[2]=\tilde{A}[2]\circ\tilde{B}[2]\circ T^{(ij)}_{\delta}[2]$ by using Eq.~\eqref{eq:statement3}. Define element-wise positive $2$ by $2$ matrices $A'[2]$ and $B'[2]$ by
\begin{eqnarray}
A'[2]&:=&\tilde{A}'[2]\circ  \begin{pmatrix}
 A[2]_{1*}& A[2]_{1*}\\
  A[2]_{2*}& A[2]_{2*}
\end{pmatrix}\\
B'[2]&:=&\tilde{B}'[2]\circ  \begin{pmatrix}
 B[2]_{*1}& B[2]_{*2}\\
  B[2]_{*1}& B[2]_{*2}
\end{pmatrix}.
\end{eqnarray}
Since $A'[2]\circ B'[2]=P[2]\circ T^{(ij)}_\delta[2]$ and $A$ $(B)$ whose submatrix $A[2]$ $(B[2])$ is replaced by $A'[2]$ $(B'[2])$ is also a right (left) stochastic matrix, Eq.~\eqref{eq:statement4} is proven.

Third, we prove Eq.~\eqref{eq:statement3} by explicitly analyzing $\mathcal{D}(2,2)$. By definition, $P\in\mathcal{D}(2,2)$ if and only if $P>0$ and there exist real numbers $A_{21}$, $A_{22}$, $B_{11}$, $B_{12}$ and $A_{11}\in(P_{11},1-P_{12})$ such that
\begin{equation}
\label{eq:decomposition1}
\begin{pmatrix}
 P_{11}&P_{12}\\
 P_{21}&P_{22}
\end{pmatrix}=
\begin{pmatrix}
 A_{11}&1-A_{11}\\
 A_{21}&A_{22}
\end{pmatrix}\circ
\begin{pmatrix}
 B_{11}&B_{12}\\
 1-B_{11}&1-B_{12}
\end{pmatrix}
\end{equation}
and $A_{21}+A_{22}=1$. Note that two conditions $A_{11}\in(P_{11},1-P_{12})$ and $P>0$ are necessary and sufficient for two matrices on the right hand side of Eq.~\eqref{eq:decomposition1} to be element-wise positive. Under the two conditions, $A_{21}+A_{22}$ can be regarded as a function of $A_{11}$ defined by 
\begin{equation}
 f(A_{11})=\frac{P_{21}}{1-\frac{P_{11}}{A_{11}}}+\frac{P_{22}}{1-\frac{P_{12}}{1-A_{11}}}.
\end{equation}
Thus, $P\in\mathcal{D}(2,2)$ if and only if $P>0$ and there exists real number $x\in(P_{11},1-P_{12})$ such that $f(x)=1$. If $P_{11}<1-P_{12}$, $f$ is an unbounded convex function ($\lim_{x\searrow P_{11}}f(x)= \lim_{x\nearrow 1-P_{12}}f(x)=\infty$) with global minimum $f(x^*)$, where $x^{*}=\lambda P_{11}+(1-\lambda)(1-P_{12})$ and $\lambda=\frac{\sqrt{P_{12}P_{22}}}{\sqrt{P_{11}P_{21}}+\sqrt{P_{12}P_{22}}}$. By straightforward calculation, $P\in\mathcal{D}(2,2)$ if and only if 
\begin{eqnarray}
 (P>0)\wedge (P_{11}+P_{12}<1)\ \ \ \ \ \ \ \ \ \ \ \ \ \ \ \ \ \ \nonumber\\
 \label{eq:D2}
 \wedge(P^c_{11} P^c_{22}+ P^c_{12} P^c_{21}-2(P_{11}P_{12}P_{21}P_{22})^{\frac{1}{2}}\geq 1),\ \ \ \ \ 
\end{eqnarray}
where $P^c_{ij}=1-P_{ij}$. This implies Eq.~\eqref{eq:statement3}.
\end{proof}

\begin{theorem}
\label{thm:main}
Assume $|\mathbb{S}^{(A)}|\geq2$ and $|\mathbb{S}^{(B)}|\geq2$. The following three conditions are equivalent:
\begin{enumerate}
 \item $(\mathbb{S}^{(A)},\mathbb{S}^{(B)})$ is perfectly distinguishable by the post-processing
 \item standard pair $\big(\big(\ket{\Phi^{(A)}_a}\big),\big(\ket{\Phi^{(B)}_b}\big)\big)$ exists such that $\braket{\phi^{(A)}_a}{\phi^{(B)}_b}=\braket{\Phi^{(A)}_a}{\Phi^{(B)}_b}$ for all $(a,b)\in\mathbb{K}$
 \item $P\in\bar{\mathcal{D}}(|\mathbb{S}^{(A)}|,|\mathbb{S}^{(B)}|)$, where $P_{ab}=|\braket{\phi^{(A)}_a}{\phi^{(B)}_b}|^2$.
\end{enumerate}
\end{theorem}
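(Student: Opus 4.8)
The plan is to prove the chain of implications $1\Rightarrow 3\Rightarrow 2\Rightarrow 1$, using Lemma~\ref{lemma:decomposition} as the crucial ingredient for the step $1\Rightarrow 3$. The implication $2\Rightarrow 1$ is immediate from Proposition~\ref{prop:distinguishableset1}: if a standard pair with the same Gram data exists, then the original pair embeds isometrically into a standard pair (the inner products determine the isometry $V$ on $\mathcal{X}={\rm span}(\mathbb{S})$), so it is perfectly distinguishable by the post-processing. For $3\Rightarrow 2$, I would reverse-engineer the standard pair from the decomposition $P=A\circ B$: write $\ket{\Phi^{(A)}_a}=\sum_b\sqrt{A_{ab}}\,e^{i\theta^{(A)}_{ab}}\ket{ab}$ and $\ket{\Phi^{(B)}_b}=\sum_a\sqrt{B_{ab}}\,e^{i\theta^{(B)}_{ab}}\ket{ab}$; then $\braket{\Phi^{(A)}_a}{\Phi^{(B)}_b}=\sqrt{A_{ab}B_{ab}}\,e^{i(\theta^{(B)}_{ab}-\theta^{(A)}_{ab})}=\sqrt{P_{ab}}\,e^{i(\cdots)}$, and since only the difference of the phases appears in each entry and the entries are indexed independently, I can choose the phases to match the arguments of $\braket{\phi^{(A)}_a}{\phi^{(B)}_b}$ exactly. (The normalization constraints $\sum_b|\alpha_{ab}|^2=\sum_b A_{ab}=1$ and $\sum_a B_{ab}=1$ hold because $A$ is right stochastic and $B$ left stochastic.) A small point to check is the boundary case where some $P_{ab}=0$, which is why $\bar{\mathcal{D}}$ rather than $\mathcal{D}$ appears; there the corresponding term simply drops out.

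The heart of the matter is $1\Rightarrow 3$. Suppose $(\mathbb{S}^{(A)},\mathbb{S}^{(B)})$ is perfectly distinguishable, with measurement table $(M_{ab})$. The idea is to extract from the POVM a decomposition of $P$ into stochastic matrices. From Eq.~\eqref{eq:mtablecond2}, $M_{ab}$ annihilates $\ket{\phi^{(A)}_{a'}}$ for $a'\neq a$ and $\ket{\phi^{(B)}_{b'}}$ for $b'\neq b$, so $\bra{\phi^{(A)}_a}M_{ab}\ket{\phi^{(B)}_b}=\braket{\phi^{(A)}_a}{\phi^{(B)}_b}$. Define $Q_{ab}:=\bra{\phi^{(A)}_a}M_{ab}\ket{\phi^{(A)}_a}$ and $R_{ab}:=\bra{\phi^{(B)}_b}M_{ab}\ket{\phi^{(B)}_b}$. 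By Eq.~\eqref{eq:mtablecond1}, $Q$ is a right stochastic matrix and $R$ is a left stochastic matrix, and by Cauchy--Schwarz, $P_{ab}=|\braket{\phi^{(A)}_a}{\phi^{(B)}_b}|^2\leq Q_{ab}R_{ab}=(Q\circ R)_{ab}$. Since $Q\circ R\in\bar{\mathcal{D}}(|\mathbb{S}^{(A)}|,|\mathbb{S}^{(B)}|)$ by definition and $0\leq P\leq Q\circ R$, Lemma~\ref{lemma:decomposition} (applicable since $|\mathbb{S}^{(A)}|,|\mathbb{S}^{(B)}|\geq 2$) yields $P\in\bar{\mathcal{D}}(|\mathbb{S}^{(A)}|,|\mathbb{S}^{(B)}|)$, which is condition 3.

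The main obstacle is really already encapsulated in Lemma~\ref{lemma:decomposition}: without it, there is no reason the Cauchy--Schwarz slack $P\leq Q\circ R$ should itself be expressible as an element-wise product of stochastic matrices, since the natural candidate factors $\sqrt{P_{ab}/R_{ab}}$ and $\sqrt{P_{ab}/Q_{ab}}$ need not have unit row/column sums. Granting the lemma, the remaining risk points are bookkeeping: handling zero overlaps carefully so the $\bar{\mathcal{D}}$ closure (Appendix~\ref{appendix:closure}) does the work, confirming the phase freedom in $3\Rightarrow 2$ is genuinely unconstrained entry by entry, and noting that the dimension of $\mathcal{Y}$ in the resulting standard pair is $|\mathbb{S}^{(A)}|\cdot|\mathbb{S}^{(B)}|$, which in general exceeds $\dim\mathcal{X}$, consistent with the embedding claim illustrated in Table~\ref{table:ex4}.
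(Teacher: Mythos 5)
Your proposal is correct and follows essentially the same route as the paper: $2\Rightarrow1$ via the Gram-matrix-determined isometry and Proposition~\ref{prop:distinguishableset1}, $3\Rightarrow2$ by building the standard pair from the factors of $P=A\circ B$ with phases matched to $\arg\braket{\phi^{(A)}_a}{\phi^{(B)}_b}$, and $1\Rightarrow3$ by extracting the stochastic matrices from the measurement table, applying Cauchy--Schwarz, and invoking Lemma~\ref{lemma:decomposition}. No substantive differences.
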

\begin{proof}
"$2\Rightarrow 1$" is shown by using Lemma \ref{lemma:isometry} in Appendix \ref{appendix:isometry}, and Proposition \ref{prop:distinguishableset1}. "$3\Rightarrow 2$" is shown by taking the standard pair with the following amplitudes:
\begin{equation}
 \alpha_{ab}=e^{-i\theta(a,b)}\sqrt{A_{ab}},\ \ \ 
 \beta_{ab}=\sqrt{B_{ab}},
\end{equation}
where $e^{i\theta(a,b)}|\braket{\phi_{a}^{(A)}}{\phi_{b}^{(B)}}|=\braket{\phi_{a}^{(A)}}{\phi_{b}^{(B)}}$ and $P=A\circ B$.

We show "$1\Rightarrow 3$" in the following.
If $(\mathbb{S}^{(A)},\mathbb{S}^{(B)})$ is perfectly distinguishable, there exists measurement table $\big(M_{ab}\big)$. Eq.~\eqref{eq:mtablecond1} guarantees that $A_{ab}=\bra{\phi_{a}^{(A)}}M_{ab}\ket{\phi_{a}^{(A)}}$ and $B_{ab}=\bra{\phi_{b}^{(B)}}M_{ab}\ket{\phi_{b}^{(B)}}$ are a right stochastic matrix and left one, respectively. Using Eq.~\eqref{eq:mtablecond2} and Cauchy-Schwartz inequality, we obtain
\begin{equation}
|\braket{\phi_{a}^{(A)}}{\phi_{b}^{(B)}}|^2=\left|\bra{\phi_{a}^{(A)}}M_{ab}\ket{\phi_{b}^{(B)}}\right|^2\leq A_{ab}B_{ab},
\end{equation}
which implies condition $3$ by using Lemma \ref{lemma:decomposition}.
\end{proof}

We can derive the following criteria for the perfect distinguishability as a corollary of Theorem \ref{thm:main} (see Fig.~\ref{fig:region}).

\begin{corollary}
 Assume $|\mathbb{S}^{(A)}|=|\mathbb{S}^{(B)}|=2$. Let $2$ by $2$ matrix $P$ be $P_{ab}=|\braket{\phi^{(A)}_a}{\phi^{(B)}_b}|^2$. Then, $(\mathbb{S}^{(A)},\mathbb{S}^{(B)})$ is perfectly distinguishable by the post-processing if and only if $P$ satisfies
 \begin{equation}
 \label{eq:region}
 P^c_{11} P^c_{22}+ P^c_{12} P^c_{21}-2(P_{11}P_{12}P_{21}P_{22})^{\frac{1}{2}}\geq 1,
\end{equation}
where $P^c_{ij}=1-P_{ij}$.
\end{corollary}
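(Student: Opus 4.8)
The strategy is to extract the criterion from Theorem~\ref{thm:main}. For $|\mathbb{S}^{(A)}|=|\mathbb{S}^{(B)}|=2$ the equivalence ``$1\Leftrightarrow3$'' reads: $(\mathbb{S}^{(A)},\mathbb{S}^{(B)})$ is perfectly distinguishable by the post-processing iff $P\in\bar{\mathcal{D}}(2,2)$, where $P_{ab}=|\braket{\phi^{(A)}_a}{\phi^{(B)}_b}|^2$; write $g(P)$ for the left-hand side of~\eqref{eq:region}. Two easy remarks set the stage: by Cauchy--Schwartz every such $P$ has entries in $[0,1]$, and projecting $\ket{\phi^{(B)}_b}$ onto ${\rm span}(\mathbb{S}^{(A)})$ (resp.\ $\ket{\phi^{(A)}_a}$ onto ${\rm span}(\mathbb{S}^{(B)})$) shows that each column (resp.\ row) sum of $P$ is at most $1$; likewise every element of $\bar{\mathcal{D}}(2,2)$ has entries in $[0,1]$, being the element-wise product of two matrices all of whose entries lie in $[0,1]$. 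Hence the corollary reduces to the purely linear-algebraic claim that, for a $2\times2$ matrix with $0\leq P_{ij}\leq1$, one has $P\in\bar{\mathcal{D}}(2,2)$ if and only if $g(P)\geq1$.

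I would first sharpen the description of $\mathcal{D}(2,2)$ obtained inside the proof of Lemma~\ref{lemma:decomposition}. Equation~\eqref{eq:D2} states $\mathcal{D}(2,2)=\{P:P>0,\ P_{11}+P_{12}<1,\ g(P)\geq1\}$, but the clause $P_{11}+P_{12}<1$ is redundant: for box matrices each of the three terms of $g(P)$ is at most the value it takes when $P_{21}=P_{22}=0$, so $g(P)\leq(1-P_{11})+(1-P_{12})$, and when $P>0$ at least one of these bounds is strict; hence $g(P)\geq1$ already forces $P_{11}+P_{12}<1$. Because $g$ is invariant under exchanging the two rows, exchanging the two columns, and transposition, the same monotonicity shows $g(P)\geq1$ forces all four line sums of $P$ to be $\leq1$ (and $<1$ when $P>0$). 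Thus $\mathcal{D}(2,2)=\{P:P>0,\ g(P)\geq1\}$. By Appendix~\ref{appendix:closure}, $\bar{\mathcal{D}}(2,2)$ is the closure of $\mathcal{D}(2,2)$; since $P\mapsto g(P)$ is continuous on $\{P\geq0\}$ (there $P_{11}P_{12}P_{21}P_{22}\geq0$, so the square root is harmless) and $\bar{\mathcal{D}}(2,2)\subseteq\{P\geq0\}$, we obtain $\bar{\mathcal{D}}(2,2)\subseteq\{P:0\leq P_{ij}\leq1,\ g(P)\geq1\}$, which is the ``only if'' half of the corollary.

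For the reverse inclusion I must show that any $P$ with $0\leq P_{ij}\leq1$ and $g(P)\geq1$ lies in $\bar{\mathcal{D}}(2,2)$. If all entries of $P$ lie in $(0,1)$ this is immediate from $\mathcal{D}(2,2)=\{P:P>0,\ g(P)\geq1\}$. The remaining case --- some entry equal to $0$ or $1$ --- is the crux: a naive perturbation fails here, because nudging a vanishing entry of $P$ away from $0$ drives $g$ below $1$. I would dispatch it by a short case analysis over the finitely many extremal-entry patterns of $P$, exhibiting in each case an explicit right stochastic $A$ and left stochastic $B$ (now allowed to take entries $0$ or $1$) with $P=A\circ B$; feasibility of these choices follows from $P$ being doubly substochastic together with $g(P)\geq1$ --- essentially the closed-interval analogue of the analysis of the convex function $f$ in the proof of Lemma~\ref{lemma:decomposition}. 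This casework is the main obstacle, though each case is routine. Combining the two inclusions gives $\bar{\mathcal{D}}(2,2)=\{P:0\leq P_{ij}\leq1,\ g(P)\geq1\}$, and the corollary follows; in particular, for the pair in Table~\ref{table:ex1}, where $P$ has all entries $\tfrac14$ and $g(P)=1$, equality holds in~\eqref{eq:region}, so that pair sits exactly on the boundary of the feasible region.
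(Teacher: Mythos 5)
Your proposal follows the paper's own route---Theorem~\ref{thm:main} combined with the explicit description \eqref{eq:D2} of $\mathcal{D}(2,2)$ and the closure statement of Appendix~\ref{appendix:closure}---and it is correct; in fact you are more careful than the paper, whose ``straightforward'' proof silently assumes that the closed inequalities describe the closure, which is precisely the boundary issue (entries equal to $0$ or $1$) that you isolate, and your observation that the clause $P_{11}+P_{12}<1$ in \eqref{eq:D2} is implied by $P>0$ and $g(P)\geq1$ checks out. The only step you leave unexecuted is the finite casework for the boundary patterns; it is genuinely routine --- e.g.\ for $P_{11}=0$ set $A_{11}=0$, $B_{12}=P_{12}$, solve $(1-A_{21})(1-B_{12})=P_{22}$ and $A_{21}(1-B_{11})=P_{21}$ for the remaining entries, and feasibility reduces exactly to $g(P)\geq1$ together with the column-sum bounds --- or it can be bypassed by a compensated perturbation, using that $g$ is non-increasing in each entry on the unit box, so raising a zero entry to $\epsilon$ can be offset by lowering a strictly positive entry by $O(\sqrt{\epsilon})$ while keeping $g\geq1$. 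Either completion is short, so the argument stands.
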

A proof is straightforward by using Eq.~\eqref{eq:D2} and the fact that $\bar{\mathcal{D}}(2,2)$ is the closure of $\mathcal{D}(2,2)$. Note that similar criteria for larger sets can be analytically obtained via a similar derivation of Eq.~\eqref{eq:D2}.

\begin{figure}
 \centering
  \includegraphics[height=.25\textheight]{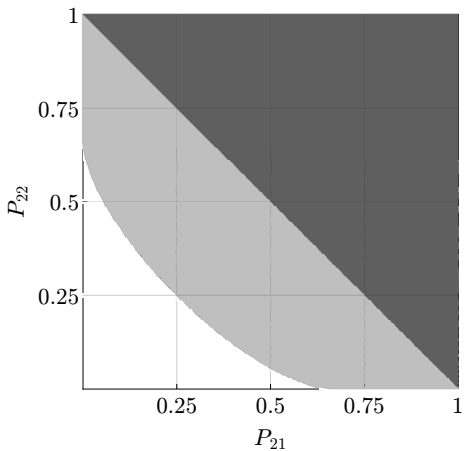}
  \caption{The region of $(P_{21},P_{22})$ for perfectly distinguishable $(\mathbb{S}^{(A)},\mathbb{S}^{(B)})$ with the help of the post-processing when $P_{11}=P_{12}=1/4$, shown by the white region. The example shown in Table \ref{table:ex1} resides on the boundary of perfectly distinguishable pairs. Note that since $\mathbb{S}^{(B)}$ is an indexed set of orthonormal vectors, $(P_{21},P_{22})$ cannot be in the dark gray region for any $(\mathbb{S}^{(A)},\mathbb{S}^{(B)})$.}
\label{fig:region}
\end{figure} 

\section{Related past work}
\label{sec:MKP}
The investigation of perfectly distinguishable tuple $\big(\mathbb{S}^{(n)}=\big(\ket{\phi_{k}^{(n)}}\big)_{k\in\mathbb{K}^{(n)}}\big)_{n=1}^N$ with the help of the post-processing of measurement outcomes with label $n$ is related to the mean king's problem (MKP) \cite{MKP1,MKP2,MKP3,MKP4,MKP5,MKP6}. The MKP consists of three steps: first, a player prepares composite system $\mathcal{H}\otimes\mathcal{R}$. Second, the mean king performs a randomly chosen projective measurement on subsystem $\mathcal{H}$. Third, the player tries to guess the king's measurement outcome by post-processing of her own measurement outcomes obtained by measuring $\mathcal{H}\otimes\mathcal{R}$ and the label of the measurement chosen by the king. The main issue in the MKP---understanding the ensemble of the king's measurement whose outcome can be perfectly identified by the player---has led to the development of several important concepts in quantum mechanics, including mutually unbiased basis \cite{MUB1,MUB2} and a weak value \cite{WV}.

It is known that even for non-commuting projective measurements which inevitably produce non-orthogonal pure states for distinct outcomes in the third step, the player can still identify the king's outcome perfectly with the help of the post-processing. Thus, the retrieval of the perfect distinguishability of non-orthogonal pure states can partially be understood by using the result of the MKP. However, since the king cannot prepare general non-orthogonal pure states in $\mathcal{H}\otimes\mathcal{R}$ by interacting only with the subsystem $\mathcal{H}$, a full understanding of the phenomenon cannot be obtained via the MKP. On the other hand, in many cases, it is enough for the player to prepare the maximally entangled state in the first step of the MKP \cite{MKP1,MKP3,MKP4,MKP5, MKP6,MUB2}. In such cases, the only non-trivial part of the problem is whether the non-orthogonal pure states produced in the third step is perfectly distinguishable with the help of the post-processing. Therefore, the investigation of perfectly distinguishable tuple $\big(\mathbb{S}^{(n)}\big)_{n=1}^N$ with the help of the post-processing extracts an intriguing structure from the MKP as a simpler problem, which would deepen our understanding of the MKP and lead us to key concepts in quantum mechanics.

As a first step toward the general case, we have investigated the case of $N=2$. Note that the three propositions we have shown hold for general $N$, which could be a guide to the further investigation for the general case.

\section{Conclusion}
We have investigated perfectly distinguishable pair of ensembles of pure states $(\mathbb{S}^{(A)},\mathbb{S}^{(B)})$ with the help of post-processing, and have shown that such a pair can always be embedded in a larger Hilbert space as a corresponding standard pair. The distinguishability has been shown to be completely determined by whether a matrix whose elements consist of $|\braket{\phi^{(A)}_a}{\phi^{(B)}_b}|^2$ can be decomposed into the element-wise product of two types of stochastic matrices. By using the result, we also gave a complete characterization of perfectly distinguishable pairs when $|\mathbb{S}^{(A)}|=|\mathbb{S}^{(B)}|=2$. Furthermore, we gave necessary conditions for $N$-tuple $\big(\mathbb{S}^{(n)}\big)_{n=1}^N$ to be perfectly distinguishable by the post-processing.

\begin{acknowledgments}
We are greatly indebted to Seiichiro Tani, Yuki Takeuchi, Yasuhiro Takahashi, Takuya Ikuta, Hayata Yamasaki, Akihito Soeda, Mio Murao, Tomoyuki Morimae, Robert Salazar and Teiko Heinosaari for their valuable discussions. 
\end{acknowledgments}

\appendix
\section{Probabilistic post-processing}
\label{appendix:probpostprocessing}
 A general post-processing can be described by conditional probability distributions $p^{(X)}(\hat{k}|\omega)$ with measurement outcome $\omega$, label of the subset $X$, and guess $\hat{k}$. Under this generalization, $(\mathbb{S}^{(A)},\mathbb{S}^{(B)})$ is perfectly distinguishable if and only if there exist POVM $\big(M_\omega\big)_{\omega\in\Omega}$ and generalized post-processing $\big(p^{(X)}(\hat{k}|\omega)\big)_{X\in \{A,B\}}$ such that
\begin{equation}
\label{eq:defofperfect2}
 \forall X,\forall k, \sum_{\omega\in\Omega}p^{(X)}(k|\omega)\bra{\phi^{(X)}_k}M_\omega\ket{\phi^{(X)}_k}=1.
\end{equation}
We show that there exist POVM $\big(M_\omega\big)_{\omega\in\Omega}$ and generalized post-processing $\big(p^{(X)}(\hat{k}|\omega)\big)_{X\in \{A,B\}}$ satisfying Eq.~\eqref{eq:defofperfect2} if and only if there exist POVM $\big(M_\omega\big)_{\omega\in\Omega}$ and post-processing $\big(f^{(X)}\big)_{X\in \{A,B\}}$ satisfying Eq.~\eqref{eq:defofperfect1}. The only non-trivial part is the "only if" part. Assume there exist POVM $\big(M_\omega\big)_{\omega\in\Omega}$ and generalized post-processing $\big(p^{(X)}(\hat{k}|\omega)\big)_{X\in \{A,B\}}$ satisfying Eq.~\eqref{eq:defofperfect2}. Since $\sum_{\omega\in\Omega}\bra{\phi^{(X)}_k}M_\omega\ket{\phi^{(X)}_k}=1$ and $p^{(X)}(k|\omega)\leq1$, $\bra{\phi^{(X)}_k}M_\omega\ket{\phi^{(X)}_k}>0$ implies $p^{(X)}(k|\omega)=1$. Therefore, we can represent $\Omega$ as the union of its disjoint subsets,
\begin{eqnarray}
 \Omega^{(X)}_{\bot}&:=&\{\omega\in\Omega|\forall k\in\mathbb{K}^{(X)},\bra{\phi^{(X)}_k}M_\omega\ket{\phi^{(X)}_k}=0\}\ \ \\
 \Omega^{(X)}_{k}&:=&\{\omega\in\Omega|\bra{\phi^{(X)}_k}M_\omega\ket{\phi^{(X)}_k}>0\}.
\end{eqnarray}
Define $f^{(X)}$ as $f^{(X)}(\omega)=k$ for $\omega\in\Omega_{k}^{(X)}$, and let $f^{(X)}(\omega)$ be an arbitrary value in $\mathbb{K}^{(X)}$ for $\omega\in\Omega^{(X)}_{\bot}$.
Then, we can verify that such $\big(f^{(X)}\big)_{X\in\{A,B\}}$ satisfies Eq.~\eqref{eq:defofperfect1}.

\section{Analytical property of $\mathcal{D}(n,m)$}
\label{appendix:closure}
In this appendix, we show that $\bar{\mathcal{D}}(n,m)$ is the closure of $\mathcal{D}(n,m)$ relative to metric space $L(\mathbb{R}^m,\mathbb{R}^n)$. By definition, $\mathcal{D}(n,m)\subseteq\bar{\mathcal{D}}(n,m)$ and $\bar{\mathcal{D}}(n,m)$ is closed. Take arbitrary element $P\in\bar{\mathcal{D}}(n,m)$ such that $P\notin \mathcal{D}(n,m)$. Let $P=A\circ B$, where $A$ and $B$ are a right stochastic matrix and left one, respectively. Let $\vec{A}_i=(A_{i1},\dots,A_{im})$ and $j_{\max}(i)$ be a function satisfying $\forall i, \vec{A}_i\leq A_{i,j_{\max}(i)}$. Define $n$ by $m$ matrix $A'$ as
\begin{equation}
 A'_{ij}=
  \left\{ \begin{array}{ll}
    0 & (\vec{A}_i>0) \\
    -\frac{A_{ij}}{2} & ({\rm else\ if\ }j=j_{\max}(i))\\
   \frac{A_{i,j_{\max}(i)}}{2(m-1)}&({\rm otherwise}).
  \end{array} \right.
\end{equation}
Then, we can verify that $A+\delta A'$ is an entrywise-positive and right stochastic matrix for any $\delta\in(0,1]$. By defining $n$ by $m$ matrix $B'$ in a similar manner, we can check $Q:=(A+\delta A')\circ(B+\delta B')\in\mathcal{D}(n,m)$, and it satisfies
\begin{eqnarray}
 |P-Q|&=&|(A+\delta A')\circ(B+\delta B')-A\circ B|\\
 { }&\leq& \delta(|A'\circ B|+|A\circ B'|+\delta|A'\circ B'|).
\end{eqnarray}
Thus, for any $\epsilon>0$, there exists $Q\in\mathcal{D}(n,m)$ such that $|P-Q|<\epsilon$, i.e., $P$ is a limit point of $\mathcal{D}(n,m)$. This completes the proof.

\section{Existence of isometry}
\label{appendix:isometry}
We prove the following lemma used in the proof of Theorem \ref{thm:main}.
\begin{lemma}
 \label{lemma:isometry}
 If $\big(\ket{\psi_i}\in\mathcal{H}\big)_{i\in\mathbb{I}}$ and $\big(\ket{\Psi_i}\in\mathcal{H}'\big)_{i\in\mathbb{I}}$ satisfy $\braket{\psi_i}{\psi_j}=\braket{\Psi_i}{\Psi_j}$ for all $i,j\in\mathbb{I}$, there exists isometry $V:\tilde{\mathcal{H}}\rightarrow\mathcal{H}'$ such that $V\ket{\psi_{i}}=\ket{\Psi_i}$ for all $i\in\mathbb{I}$, where $\tilde{\mathcal{H}}={\rm span}\big(\{\ket{\psi_i}\}_{i\in\mathbb{I}}\big)$ and $\mathbb{I}$ is a finite set.
\end{lemma}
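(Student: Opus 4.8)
The plan is to prove Lemma~\ref{lemma:isometry} by the standard Gram-matrix argument: equal inner products mean the two indexed families span isometric subspaces, and an explicit linear map matching them extends to an isometry. First I would pick a maximal linearly independent subset $\{\ket{\psi_{i_1}},\dots,\ket{\psi_{i_r}}\}$ of $\{\ket{\psi_i}\}_{i\in\mathbb{I}}$, so $r=\dim\tilde{\mathcal{H}}$ and this set is a basis of $\tilde{\mathcal{H}}$. Define $V:\tilde{\mathcal{H}}\rightarrow\mathcal{H}'$ on this basis by $V\ket{\psi_{i_k}}=\ket{\Psi_{i_k}}$ and extend linearly; since a basis can be sent anywhere, $V$ is a well-defined linear map.

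Next I would check that $V$ preserves inner products on $\tilde{\mathcal{H}}$. For $\ket{u}=\sum_k c_k\ket{\psi_{i_k}}$ and $\ket{v}=\sum_l d_l\ket{\psi_{i_l}}$ we have $\braket{u}{v}=\sum_{k,l}\overline{c_k}d_l\braket{\psi_{i_k}}{\psi_{i_l}}$, and by hypothesis $\braket{\psi_{i_k}}{\psi_{i_l}}=\braket{\Psi_{i_k}}{\Psi_{i_l}}$, so $\braket{u}{v}=\braket{Vu}{Vv}$; in particular $V$ is injective and hence an isometry onto its image. It remains to verify $V\ket{\psi_i}=\ket{\Psi_i}$ for \emph{every} $i\in\mathbb{I}$, not just the chosen basis indices. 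Writing $\ket{\psi_i}=\sum_k c^{(i)}_k\ket{\psi_{i_k}}$, linearity gives $V\ket{\psi_i}=\sum_k c^{(i)}_k\ket{\Psi_{i_k}}$, so I must show $\ket{\Psi_i}=\sum_k c^{(i)}_k\ket{\Psi_{i_k}}$. This follows by computing the squared norm of the difference $\ket{\Delta}:=\ket{\Psi_i}-\sum_k c^{(i)}_k\ket{\Psi_{i_k}}$: expanding $\braket{\Delta}{\Delta}$ produces only inner products among the $\ket{\Psi}$'s with indices in $\{i\}\cup\{i_k\}$, each of which equals the corresponding inner product among the $\ket{\psi}$'s; the same expansion for $\ket{\psi_i}-\sum_k c^{(i)}_k\ket{\psi_{i_k}}$ is zero by the definition of the coefficients $c^{(i)}_k$, so $\braket{\Delta}{\Delta}=0$ and $\ket{\Delta}=0$.

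Finally, if one wants $V$ defined on all of $\tilde{\mathcal{H}}$ as stated (rather than only on the span of the $\ket{\psi_i}$, which in fact already equals $\tilde{\mathcal{H}}$), there is nothing more to do; if a target-space extension to all of $\mathcal{H}'$ were desired one would note $\dim\tilde{\mathcal{H}}\le\dim\mathcal{H}'$ (forced by the existence of the $\ket{\Psi_i}$) and extend orthonormally, but the lemma as stated only needs the isometry on $\tilde{\mathcal{H}}$. There is no real obstacle here --- the only point requiring a small argument is the passage from "matches on a basis" to "matches on all of $\mathbb{I}$," handled by the norm-of-the-difference computation above, which is routine precisely because the hypothesis equates \emph{all} pairwise inner products, including those involving the non-basis vectors.
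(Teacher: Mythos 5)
Your proof is correct and follows essentially the same route as the paper's: choose a linearly independent subfamily of the $\ket{\psi_i}$ as a basis of $\tilde{\mathcal{H}}$, define $V$ on that basis, verify it preserves inner products, and then check that the remaining vectors are mapped correctly because their linear-dependence relations carry over to the $\ket{\Psi_i}$. Your final step --- expanding $\braket{\Delta}{\Delta}$ and replacing each Gram-matrix entry of the $\Psi$'s by the corresponding entry for the $\psi$'s --- is a slightly more direct version of the paper's argument, which reaches the same conclusion by orthonormalizing and comparing components and norms.
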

\begin{proof}
Take a basis of $\tilde{\mathcal{H}}$ as $\{\ket{\psi_i}\}_{i\in\tilde{\mathbb{I}}}$, where $\tilde{\mathbb{I}}\subseteq\mathbb{I}$. Define linear operator $V:\tilde{\mathcal{H}}\rightarrow\mathcal{H}'$ as $V\ket{\psi_i}=\ket{\Psi_i}$ for all $i\in\tilde{\mathbb{I}}$. We can easily check that $V$ is an isometry since it does not change the inner product of the basis, i.e., $\bra{\psi_i}V^{\dag}V\ket{\psi_j}=\braket{\Psi_i}{\Psi_j}=\braket{\psi_i}{\psi_j}$ for all $i,j\in\tilde{\mathbb{I}}$. 
 
Let an orthonormal basis of $\tilde{\mathcal{H}}$ be $\big(\ket{\tilde{\psi}_i}\big)_{i\in\tilde{\mathbb{I}}}$. We can verify that indexed set of vectors $\big(\ket{\tilde{\Psi}_i}\big)_{i\in\tilde{\mathbb{I}}}$ defined by $\ket{\tilde{\Psi}_i}=\sum_{j\in\tilde{\mathbb{I}}}\alpha_{ij}\ket{\Psi_j}$ is also orthonormal, where $\alpha_{ij}$ satisfies $\ket{\tilde{\psi}_i}=\sum_{j\in\tilde{\mathbb{I}}}\alpha_{ij}\ket{\psi_j}$. Take arbitrary $j\in\mathbb{I}\setminus\tilde{\mathbb{I}}$ and let $\ket{\psi_j}=\sum_{i\in\tilde{\mathbb{I}}}\beta_{i}\ket{\psi_i}$.

Since $\braket{\tilde{\psi}_k}{\psi_i}=\braket{\tilde{\Psi}_k}{\Psi_i}$ for all $k\in\tilde{\mathbb{I}}$ and $i\in\mathbb{I}$,
\begin{equation}
\label{eq:c2}
\forall k\in\tilde{\mathbb{I}},\  \bra{\tilde{\Psi}_k}\Big(\sum_{i\in\tilde{\mathbb{I}}}\beta_{i}\ket{\Psi_i}\Big)=\braket{\tilde{\psi}_k}{\psi_j}=\braket{\tilde{\Psi}_k}{\Psi_j}.
\end{equation}
Since $\braket{\Psi_j}{\Psi_j}=\braket{\psi_j}{\psi_j}$, $\ket{\Psi_j}=\sum_{i\in\tilde{\mathbb{I}}}\beta_{i}\ket{\Psi_i}$, which shows $V\ket{\psi_j}=\ket{\Psi_j}$ for all $j\in\mathbb{I}\setminus\tilde{\mathbb{I}}$.
\end{proof}

\nocite{*}
\bibliographystyle{eptcs}
\bibliography{generic}

\end{document}